\documentclass[12pt,reqno,fleqn]{amsart}
\usepackage{latexsym}
\usepackage{amssymb}
\usepackage{amsxtra}
\usepackage{amsmath}
\usepackage{amsfonts}
\textwidth14cm \hoffset-2cm \textwidth 17truecm

\newtheorem{theorem}{Theorem}[section]
\newtheorem{lemma}[theorem]{Lemma}


\usepackage{amssymb}
\usepackage{mathrsfs}
\usepackage{amsmath}
\usepackage{enumerate}

\makeatother       

\makeatletter      
\@addtoreset{equation}{section}
\makeatother       

\begin{document}

\title[constrained discrete KP]{Virasoro type algebraic structure hidden in the constrained discrete KP hierarchy}
\author{Maohua Li$^{1,2},$  Chuanzhong Li$^2$, Keilei Tian$^{1,3}$, Jingsong
He$^2$$^*$,Yi Cheng$^1$ }\dedicatory {
\mbox{}\hspace{-2.5cm}1. Department of Mathematics, USTC, Hefei, 230026 Anhui, P.R.China\\
2. Department of Mathematics, Ningbo University, Ningbo, 315211 Zhejiang, P.R.China\\
\mbox{}\hspace{-.5cm} 3. Department of Physics and Astronomy,
University of Turku, 20014 Turku, Finland \\}
\date{}

\thanks{$^*$ Corresponding author: hejingsong@nbu.edu.cn}

\begin{abstract}
In this paper, we construct the additional symmetries of one-component
constrained discrete KP (cdKP) hierarchy, and then prove that the
algebraic structure of  the symmetry flows is the positive half
of Virasoro algebra.
\end{abstract}


\maketitle
\noindent Mathematics Subject Classifications(2000).  37K05, 37K10, 37K40.\\
Keywords: {dKP hierarchy, constrained dKP hierarchy, additional symmetry, Virasoro algebra} \\

\section{ Introduction}

In the past few years, lots of attention have been given to the study of
 Kadomtsev-Petviashvili (KP) hierarchy \cite{dkjm,dl1} in the field of integrable systems.
The Lax pairs, Hamiltonian structures, symmetries and conservation
laws, the N-soliton, tau function,  the gauge transformation,
reductions etc. of the KP hierarchy and its sub-hierarchies have
been discussed. A specific interesting aspect of the research of the
KP hierarchy is additional symmetry\cite{dl1,os1,asv2,tu07lmp,hetian07,jipenghe2010}.
Additional symmetries are special symmetries which are not contained in the KP
hierarchy and do not commute  with each other. The additional
symmetry flows of the KP hierarchy form an infinite dimensional
algebra $W_{1+\infty}$\cite{asv2}. Recently, there are
several new results   about partition function in the matrix models
and Seiberg-Witten theory associated with additional symmetries,
string equation and Virasoro constraints of the KP hierarchy
\cite{Morozov1994,Morozov1998,Alexandrov, Mironov2008,Aratyn2003}.

There are several sub-hierarchies of the KP by considering different
reduction conditions on the Lax operator $L$. One of them is called
constrained KP (cKP) hierarchy \cite{kss,chengyiliyishenpla1991,
chengyiliyishenjmp1995} by setting the  Lax operator as  $L=\partial
+\sum _{i=1}^m \Phi_i \partial ^{-i}\Psi _i$. Here $ \Phi_i $ is
an eigenfunction and $\Psi _i $ is an adjoint eigenfunction of the
cKP hierarchy. The cKP hierarchy contains a large number of interesting
 soliton equations. The basic idea of this procedure is
so-called symmetry constraint\cite{kss,chengyiliyishenpla1991,
chengyiliyishenjmp1995}. The negative part of the Lax operator of the
constrained KP, i.e. $\sum _{i=1}^m \Phi_i \partial ^{-i}\Psi _i$,
is a generator\cite{dl1} of the additional symmetry of the KP hierarchy.
This observation  inspires  the study\cite{aratyn1997} of the
additional symmetries of the cKP hierarchy. However, the
additional symmetry flows of the KP hierarchy are not consistent with the
special form of the  Lax operator for the cKP hierarchy autonomously,
and so it is highly non-trivial  to find  a suitable form of additional
symmetry  flows for this sub-hierarchy.  The correct additional symmetry flows are given  by means of
a crucial modification  associated with a complicated  operator $X_k$\cite{aratyn1997}.
Very recently, by a further modification of the additional flows,
 the additional symmetries of
 the constrained BKP and constrained CKP hierarchies are given in references
 \cite{tianhe2011,tu_shenJMP2011}. In addition to the above-mentioned
 integrable systems, discrete system such as Toda hierarchy also has interesting algebraic structure
of additional symmetry\cite{asv2,jipenghe2011}.

 At the same time the discovery of Hirota's bilinear difference equation \cite{Hirota1981}
 attracted much interest in looking for other integrable discrete equations and systems.
This famous 3-dimensional difference equation is known to provide a canonical integrable
discretization for most important types of soliton equations.
There are several difference kinds of the discrete hierarchies including
differential-difference
 KP hierarchy, semi-discrete systems, full discrete equations and so on. The differential-difference
 KP (dKP) hierarchy \cite{Kupershimidt,Iliev,czli2012} defined by the difference operator
 $\triangle$ is one interesting object  of the integrable discrete systems. Note that,
 the additional symmetry of dKP hierarchy and it's Sato B\"acklund
 transformations  have been given in reference\cite{LiuS2}.
So it is worthy to find the hidden algebraic structure in the
constrained discrete KP(cdKP) hierarchy using the additional symmetry flows,
which is the main purpose of us. This will be done by following four
steps:
\begin{itemize}
\item
1) show the inconsistency between the additional symmetry
flows eq.(\ref{additionalofdkp}) of the dKP hierarchy and the
 Lax form eq.(\ref{1cdKPlaxeq}) of the cdKP hierarchy;
\item
2) modify the additional flows to sort out  this inconsistency;
\item
3) prove that the modified additional flows are symmetry of the cdKP
hierarchy;
\item
4) identify the algebraic structure of the additional symmetry
flows.
\end{itemize}
The crucial step of this process is to find a suitable modification
of the additional flows of the dKP hierarchy such that these flows
are consistent with the form of the Lax pair of the cdKP hierarchy.

The paper is organized as follows. Some basic results of of dKP
hierarchy and the additional symmetry of dKP hierarchy are
summarized in Section \ref{section2}. After introducing of a definition of the
Lax equation of cdKP hierarchy, the additional symmetry flows of the
cdKP hierarchy
 are defined properly by means of a crucial modification
 from the additional symmetry flows of the dKP hierarchy in Section \ref{section3}.
Next the Virasoro type algebraic structure of these additional symmetry flows
is also identified by a straightforward calculation in Section \ref{section4}.
Section \ref{conclusion} is devoted to conclusions and discussions.

\section{The dKP hierarchy and it's additional symmetry}\label{section2}
\setcounter{section}{2} Let us briefly recall some basic facts
about the dKP hierarchy according to reference \cite{Iliev}.
Firstly
a space $F$, namely
\begin{equation}
F=\left\{ f(n)=f(n,t_1,t_2,\cdots,t_j,\cdots);  n\in\mathbb{Z}, t_i\in\mathbb{R}
\right\}
\end{equation}
is defined for the space of the discrete KP hierarchy.
And $\Lambda,\triangle$ are denote for the shift operator and the
difference operator, respectively. Their actions on function $f(n)$ are
define for
\begin{equation}\Lambda f(n)=f(n+1)
\end{equation}
and
\begin{equation}\triangle f(n)=f(n+1)-f(n)=(\Lambda -I)f(n)
\end{equation} respectively,
where $I$ is the identity operator.

For any $j\in\mathbb{Z},$ the Leibniz  rule of $\triangle$ operation is,

\begin{equation}\triangle^j\circ
f=\sum^{\infty}_{i=0}\binom{j}{i}(\triangle^i
f)(n+j-i)\triangle^{j-i},\hspace{.3cm}
\binom{j}{i}=\frac{j(j-1)\cdots(j-i+1)}{i!}.\label{81}
\end{equation}
So an associative ring $F(\triangle)$ of formal pseudo
difference operators is obtained, with the operation $``+"$ and $``\circ"$, namely
$F(\triangle)=\left\{R=\sum_{j=-\infty}^d f_j(n)\triangle^j,
f_j(n)\in R, n\in\mathbb{Z}\right\},
$
and denote $R_+:=\sum_{j=0}^d f_j(n)\circ\triangle^j$ as the
positive projection of $R$ and by $R_-:=\sum_{j=-\infty}^{-1}
f_j(n)\circ \triangle^j$, the negative projection of $R$. The
adjoint operator to the $\triangle$ operator is given by
$\triangle^*$,
\begin{equation}
\triangle^* \circ f(n)=(\Lambda^{-1}-I)f(n)=f(n-1)-f(n),
\end{equation}
where $\Lambda^{-1} f(n)=f(n-1)$, and the corresponding ``$\circ$"
operation is
\begin{equation}
\triangle^{*j}\circ
f=\sum^{\infty}_{i=0}\binom{j}{i}(\triangle^{*i}f)(n+i-j)\triangle^{*j-i}.
\end{equation}
Then the adjoint ring $F(\triangle^*)$ to the
$F(\triangle)$ is obtained, and the formal adjoint to $R\in F(\triangle)$ is defined
by $R^*\in F(\triangle^*)$ as $R^*=\sum_{j=-\infty}^d
\triangle^{*j}\circ f_j(n)$. The $"*"$ operation satisfies rules as
$(F\circ G)^*=G^*\circ F^*$ for two operators and $f(n)^*=f(n)$ for
a function.

         The discrete KP (dKP) hierarchy \cite{Iliev} is a family of evolution equations depending on
infinitely many variables $t=(t_1,t_2,\cdots)$
\begin{equation} \label{floweq}
\frac{\partial L}{\partial t_k}=[B_k, L],\ \ \ B_k:=(L^k)_+,
\end{equation}
where $L$ is a general first-order pseudo difference operator(PDO)
\begin{equation} \label{laxoperatordkp}
L(n)=\triangle + \sum_{j=1}^{\infty} f_j(n)\triangle^{-j}.
\end{equation}
 $B_m=(L^m)_+=\sum^m_{j=0}a_j(n)\triangle^j$, i. e.  $(L^m)_+$ is the
non-negative projection of $L^m$, and $(L^m)_-=L^m-(L^m)_+$ is the
negative projection of $L^m$. The Lax operator in
eq.(\ref{laxoperatordkp}) can be generated by a dressing operator
\begin{equation}
W(n;t)=1+\sum^\infty_{j=1}w_j(n;t)\triangle^{-j}.
\end{equation}
through
\begin{equation}
L=W \circ \Delta \circ W^{-1}.
\end{equation}
Further the flow equation
(\ref{floweq}) is equivalent to the so-called Sato equation,
\begin{equation}\label{tkaction}
\partial_{t_k}W=-(L^{k})_-\circ W.
\end{equation}

Now we introduce the additional symmetry flows\cite{LiuS2} of the dKP hierarchy as following.
Set
\begin{equation}
\Gamma_\Delta=\sum_{i=1}^{\infty}(it_i\Delta^{i-1}+{(-1)}^{i-1}n\Delta^{i-1}),
\end{equation} and it  is easy to find the following formula
\begin{equation}\label{commute}
[\partial_{t_k}-\Delta^k,\Gamma_{\Delta}]=0.
\end{equation}
Define another operator
\begin{equation}\label{}
M_{\Delta}=W \circ \Gamma_{\Delta}\circ W^{-1}.
\end{equation}
There are the following commutation relations
\begin{equation}
[\Delta,\Gamma_{\Delta}]=1,[L,M_{\Delta}]=1,
\end{equation}
which can be verified by a straightforward calculation.
By using the Sato equation, the isospectral flow of the $M_{\Delta}$ operator
is given by
 \begin{equation}
\partial_{t_k}M_{\Delta}=[L^k_+,M_{\Delta}].
\end{equation}  More generally,
\begin{equation}
\partial_{t_k}(M_{\Delta}^mL^l)=[L^k_+,M_{\Delta}^mL^l].
\end{equation}

Based on the above preparation, the additional symmetry flows\cite{LiuS2}
of the dKP hierarchy
are define by their actions on the dressing operator
 \begin{equation}
\overline{\partial_{l,m}}W=-(M_{\Delta}^mL^l)_-\circ W,
\end{equation}
or equivalently  on the Lax operator
\begin{eqnarray}
\overline{\partial_{l,m}}L&=&[-(M_{\Delta}^mL^l)_-,L]
\label{additionalofdkp},
\end{eqnarray} where $\overline{\partial_{l,m}}$ denotes
the derivative with respect to an additional new
variable $t_{lm}^*$. The more general actions of the additional symmetry flows
of the dKP are given by
\begin{equation}\label{ML}
 \overline{\partial_{l,m}}M_{\Delta}^nL^k=
 [-(M_{\Delta}^mL^l)_-,M_{\Delta}^nL^k].
\end{equation}

As the end of this section, we would like to point out two important
 technical identities as followings. For two pseudo-difference
 operators $X_i=f_i\Delta^{-1}g_i,i=1,2,$ we have
\begin{equation}
X_1\circ X_2=X_1(f_2)\Delta^{-1}g_2+f_1\Delta^{-1}X_2^{*}(g_1).
\label{operatorX1X2}
\end{equation}
For a pure-difference operator $K$ and two arbitrary smooth functions ($q, r$),
we have
\begin{equation}\label{operatork}
[K,q\Delta^{-1}r]_-=K(q)\Delta^{-1}r-q\Delta^{-1}K^*(r).
\end{equation}
The usual versions(non-discrete) of them are given by eq.(A.3) and
eq.(A.4) in reference\cite{aratyn1997}.

\section{Additional symmetry flows of the cdKP hierarchy}\label{section3}

 The one-component cdKP hierarchy is defined by following Lax equation
\begin{equation}\label{1cdKPlaxeq}
\frac{\partial L}{\partial t_l} = [(L^l)_+,L], l=1,2,\cdots,
\end{equation}
associated with a special Lax operator
\begin{equation} \label{laxofcdkp2}
L = (L)_{+} + q(t)\triangle^{-1}r(t),
\end{equation}
 and  $q(t)$ is an eigenfunction, $r(t)$ is an adjoint
 eigenfunction. The eigenfunction and adjoint eigenfunction {$q(t),r(t)$} are
important dynamical variables in the cdKP hierarchy.
Using identity eq.(\ref{operatork}), one can check
 Lax equation (\ref{1cdKPlaxeq}) is consistent with the evolution
 equations of the eigenfunction(or adjoint eigenfunction)
\begin{equation}
q_{t_m} = B_mq,\quad  r_{t_m} = -B_m^*r,\quad  B_m=(L^m)_{+}, \forall m \in N.
\end{equation}
Therefore  the cdKP hierarchy in eq.(\ref{1cdKPlaxeq}) is well defined.
And Eq.(\ref{1cdKPlaxeq}) implies that the Sato equation of cdKP hierarchy is
\begin{equation} \label{satoofcdkp}
\partial_l W = -(L^l)_-\circ W,
\end{equation}
where $\partial_l=\frac{\partial }{\partial t_l}$.

The central task of this section is to find the additional symmetry
flows of the cdKP hierarchy, which can be realized by three steps as
we have mentioned in the introduction. As usual calculation of the
infinitesimal analysis, the desired action of the additional
symmetry flows on the Lax operator $L$ of the cdKP hierarchy should be
\begin{equation}\label{araty}
(\widetilde{\partial_{\tau}}L)_-=\widetilde{\partial_{\tau}}(q(n,t))\triangle^{-1}r(n,t)
+q(n,t)\triangle^{-1}\widetilde{\partial_{\tau}}(r(n,t)).
\end{equation}
However, according to the definition eq.(\ref{additionalofdkp}) of the dKP hierarchy,
the action of original
additional flows of the cdKP hierarchy  is  expressed by
\begin{equation}\label{additional}
(\overline{\partial_{k,1}} L)_-=[(M_{\Delta}L^k)_+,L]_-+(L^k)_-.
\end{equation}
The following lemma and  eq.(\ref{operatork}) show that it can not be rewritten as
the desired form eq.(\ref{araty}) except $k=0,1,2$. Specifically, $k=3$, from $L^3_{-}$
\begin{equation}\label{L3}
(L^3)_-=L^2(q)\Delta^{-1}r+L(q)\Delta^{-1}L^*(r)+q\Delta^{-1}L_2^*(r),
\end{equation} we can find the middle term can not be rewritten as the form of eq.(\ref{araty}).
This demonstrates obviously the
inconsistency  between the additional symmetry
flows eq.(\ref{additionalofdkp}) of the dKP hierarchy and the
 Lax form eq.(\ref{1cdKPlaxeq}) of the cdKP hierarchy.
\begin{lemma}The Lax operator $L$ of constrained dKP hierarchy given by (\ref{laxofcdkp2}) satisfied the relation of
\begin{equation}\label{Lk}
(L^k)_-=\sum_{j=0}^{k-1}L^{k-j-1}(q)\Delta^{-1}{(L^*)}^j(r).
\end{equation}
where $L(q)=(L)_{+}(q) + q(t)\triangle^{-1}(r(t)q).$
\end{lemma}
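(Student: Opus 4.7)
The plan is to induct on $k$. The base case $k=1$ is immediate: from \eqref{laxofcdkp2}, $(L)_-=q\Delta^{-1}r$, which is exactly the single $j=0$ term on the right-hand side of \eqref{Lk}.

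For the inductive step I would write $L^{k+1}=L^k\cdot L=L^k L_++L^k\cdot q\Delta^{-1}r$, split each $L^k=L^k_++L^k_-$, and feed the inductive hypothesis into $L^k_-$. To read off the negative projections I would rely on three operator-calculus facts, all consequences of \eqref{operatorX1X2}--\eqref{operatork}: (i) a product of two purely non-negative operators has trivial negative part; (ii) for any purely non-negative operator $K$ and functions $f,g$, one has $(K\cdot f\Delta^{-1}g)_-=K(f)\Delta^{-1}g$ (from the discrete Leibniz rule \eqref{81}) and $(f\Delta^{-1}g\cdot K)_-=f\Delta^{-1}K^*(g)$ (by combining (i) with \eqref{operatork}); (iii) the product $f_1\Delta^{-1}g_1\cdot f_2\Delta^{-1}g_2$ already has negative order and is expanded by \eqref{operatorX1X2} into two terms carrying double $\Delta^{-1}$ insertions. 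Putting the four resulting pieces together gives
\[
(L^{k+1})_-=L^k_+(q)\Delta^{-1}r+\sum_{j=0}^{k-1}L^{k-j-1}(q)\Delta^{-1}L_+^*\bigl((L^*)^j(r)\bigr)+R_q+R_r,
\]
where $R_q=\sum_j L^{k-j-1}(q)\Delta^{-1}\bigl((L^*)^j(r)\,q\bigr)\Delta^{-1}r$ and $R_r=\sum_j L^{k-j-1}(q)\Delta^{-1}r\,\Delta^{*-1}\bigl(q(L^*)^j(r)\bigr)$ are the double-$\Delta^{-1}$ remainders produced by $L^k_-\cdot q\Delta^{-1}r$ via \eqref{operatorX1X2}.

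The main obstacle is the final reassembly of these four contributions into $\sum_{j=0}^{k}L^{k-j}(q)\Delta^{-1}(L^*)^j(r)$, and the key is to invoke the defining relation $L=L_++q\Delta^{-1}r$ one more time, both on $q$ and, dually, on $(L^*)^j(r)$. Acting with $L^k_-$ on the function $q$ yields $L^k_-(q)=\sum_{j}L^{k-j-1}(q)\Delta^{-1}\bigl((L^*)^j(r)\,q\bigr)$, so $L^k_+(q)\Delta^{-1}r+R_q=\bigl(L^k_+(q)+L^k_-(q)\bigr)\Delta^{-1}r=L^k(q)\Delta^{-1}r$, which is the $j=0$ summand of the target. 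Dually, $L^*\bigl((L^*)^j(r)\bigr)=L_+^*\bigl((L^*)^j(r)\bigr)+r\Delta^{*-1}\bigl(q(L^*)^j(r)\bigr)$ merges the $L_+^*$-sum with $R_r$ into $\sum_{j=0}^{k-1}L^{k-j-1}(q)\Delta^{-1}(L^*)^{j+1}(r)$, which after reindexing is precisely the $j=1,\dots,k$ part of the target. Adding the two collapsed pieces closes the induction. Everything else is bookkeeping with the Leibniz rule and the identities \eqref{operatorX1X2}--\eqref{operatork}; the only real subtlety is recognising that the two double-$\Delta^{-1}$ remainders are designed to be absorbed by the very structure $L=L_++q\Delta^{-1}r$ and its adjoint.
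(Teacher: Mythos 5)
Your proof is correct, and it follows exactly the route the paper indicates: the paper's own proof is just the one-line remark that the lemma ``can be reduced by induction with the help of technical identity in eq.~(\ref{operatorX1X2})'', with all details omitted. You have simply carried out that induction in full, correctly absorbing the two double-$\Delta^{-1}$ remainders via $L=L_++q\Delta^{-1}r$ and its adjoint, so no further comment is needed.
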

\begin{proof}It can be reduced by induction with the help of
technical identity in eq.(\ref{operatorX1X2}).  We omit it here.
\end{proof}

To overcome the inconsistency, we shall introduce an operator
$Y_k$  to modify the additional symmetry
eq.(\ref{additionalofdkp}) of the dKP hierarchy. The following
lemmas implied by identity (\ref{operatorX1X2},\ref{operatork}) are
necessary.
\begin{lemma}
\begin{equation}\label{operatorXL}
[X,L]_-=\sum_{k=1}^{l}[M_k\Delta^{-1}L^*(N_k)-L(M_k)\Delta^{-1}N_k]+[X(q)\Delta^{-1}r-q\Delta^{-1}X^*(r)],
\end{equation} with definitions (\ref{laxofcdkp2}) and
\begin{equation}\label{x}
X=\sum_{k=1}^{l}M_k\Delta^{-1}N_k.
\end{equation}
\end{lemma}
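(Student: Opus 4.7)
The plan is to split the Lax operator as $L = L_+ + q\Delta^{-1}r$ and compute the two pieces of $[X,L]_-$ separately, applying identity (\ref{operatork}) to the pure-difference piece and identity (\ref{operatorX1X2}) to the rank-one piece, then reassemble so that $L_+$ and $q\Delta^{-1}r$ combine into $L$ (and their adjoints into $L^*$).

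First, since $L_+$ is a pure difference operator, identity (\ref{operatork}) applied termwise to $X = \sum_{k=1}^{l} M_k\Delta^{-1}N_k$ yields
\begin{equation*}
[X, L_+]_- = \sum_{k=1}^{l}\bigl[M_k\Delta^{-1}L_+^*(N_k) - L_+(M_k)\Delta^{-1}N_k\bigr].
\end{equation*}
Next, each commutator $[M_k\Delta^{-1}N_k,\, q\Delta^{-1}r]$ is between two operators of the form $f\Delta^{-1}g$, so is already of order $\leq -1$ and identity (\ref{operatorX1X2}) applied to both orderings gives
\begin{equation*}
[M_k\Delta^{-1}N_k,\, q\Delta^{-1}r] = (M_k\Delta^{-1}N_k)(q)\Delta^{-1}r + M_k\Delta^{-1}(q\Delta^{-1}r)^*(N_k) - (q\Delta^{-1}r)(M_k)\Delta^{-1}N_k - q\Delta^{-1}(M_k\Delta^{-1}N_k)^*(r).
\end{equation*}

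Summing over $k$ and adding the two contributions, I would group the terms in four natural packages. The terms with $M_k\Delta^{-1}(\cdot)$ combine using linearity of the adjoint into $M_k\Delta^{-1}\bigl[L_+^* + (q\Delta^{-1}r)^*\bigr](N_k) = M_k\Delta^{-1}L^*(N_k)$; the terms with $(\cdot)\Delta^{-1}N_k$ combine into $-\bigl[L_+(M_k) + (q\Delta^{-1}r)(M_k)\bigr]\Delta^{-1}N_k = -L(M_k)\Delta^{-1}N_k$, using the convention $L(M_k) = L_+(M_k) + q\Delta^{-1}(rM_k)$ from the previous lemma. The remaining two families of terms, after summing over $k$, collapse to $X(q)\Delta^{-1}r$ and $-q\Delta^{-1}X^*(r)$ by the definition of $X$, yielding exactly the right-hand side of (\ref{operatorXL}).

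There is no real analytic obstacle here; the whole argument is algebraic manipulation inside $F(\triangle)$. The only delicate bookkeeping step is recognizing that the two correction pieces coming from the $q\Delta^{-1}r$ part of $L$ are precisely what is needed to upgrade $L_+$ to $L$ and $L_+^*$ to $L^*$ inside the first two summands, while the remaining residual pieces collect cleanly into the $X(q),\, X^*(r)$ contributions. Consequently the lemma follows by a direct substitution once identities (\ref{operatorX1X2}) and (\ref{operatork}) are applied in the order described.
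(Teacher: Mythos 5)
Your proof is correct and is exactly the argument the paper has in mind: the paper omits the details and simply remarks that the lemma is ``implied by'' identities (\ref{operatorX1X2}) and (\ref{operatork}), and your splitting $L=L_++q\Delta^{-1}r$, applying (\ref{operatork}) to the pure-difference part and (\ref{operatorX1X2}) to the rank-one part, and recombining $L_+$ with $q\Delta^{-1}r$ (using the convention $L(M_k)=L_+(M_k)+q\Delta^{-1}(rM_k)$ already fixed in the preceding lemma) is precisely that computation carried out in full. No gaps.
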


We now introduce a pseudo-difference operator  $Y_k$,
\begin{eqnarray}
Y_k&=&\sum_{j=0}^{k-1}[j-\frac{1}{2}(k-1)]L^{k-1-j}(q)
\Delta^{-1}(L^*)^j(r),k\geq 2,\label{y} \\
Y_k&=&0,k=-1,0,1.\label{y12}
\end{eqnarray}
and have the following property.
\begin{lemma}
The action of flows $\partial_l$ of the cdKP hierarchy  on the
$Y_{k}$ is
\begin{equation}\label{ykderivat}
\partial_{l}
Y_{k}=[(L^l)_+,Y_{k}]_-.
\end{equation}
\end{lemma}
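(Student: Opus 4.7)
The plan is to compute $\partial_l Y_k$ term-by-term by differentiating each factor $L^{k-1-j}(q)$ and $(L^*)^j(r)$ in the defining sum for $Y_k$, and then repackage the result using identity eq.(\ref{operatork}) so that each summand takes the form $[(L^l)_+,\, L^{k-1-j}(q)\Delta^{-1}(L^*)^j(r)]_-$. The coefficients $j-\tfrac{1}{2}(k-1)$ play no essential role in this lemma — they are just scalars that factor out of both sides — so the identity follows from linearity once each individual rank-one term is handled.

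First I would show the following computational building block: for any nonnegative integer $p$,
$\partial_l\bigl(L^{p}(q)\bigr)=(L^l)_+\bigl(L^{p}(q)\bigr), \qquad \partial_l\bigl((L^*)^{p}(r)\bigr)=-(L^l)_+^{*}\bigl((L^*)^{p}(r)\bigr).$
These are immediate from $\partial_l L=[(L^l)_+,L]$, $\partial_l q=(L^l)_+q$ and $\partial_l r=-(L^l)_+^{*}r$: writing $\partial_l(L^p q)=(\partial_l L^p)q+L^p(\partial_l q)=[(L^l)_+,L^p]q+L^p(L^l)_+q$ causes the unwanted $L^p(L^l)_+q$ piece to cancel, leaving $(L^l)_+ L^p q$. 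The adjoint version follows by taking $*$ of $\partial_l L=[(L^l)_+,L]$ to get $\partial_l L^*=-[(L^l)_+^{*},L^*]$, and then repeating the same argument.

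Next, applying the Leibniz rule in the definition of $Y_k$, I would collect
$\partial_l Y_k=\sum_{j=0}^{k-1}\!\bigl[j-\tfrac{1}{2}(k-1)\bigr]\Bigl\{\bigl((L^l)_+L^{k-1-j}(q)\bigr)\Delta^{-1}(L^*)^j(r)-L^{k-1-j}(q)\Delta^{-1}\bigl((L^l)_+^{*}(L^*)^j(r)\bigr)\Bigr\}.$
Each brace is exactly the right-hand side of identity eq.(\ref{operatork}) applied with the pure-difference operator $K=(L^l)_+$ and the functions $(q,r)$ replaced by $\bigl(L^{k-1-j}(q),\,(L^*)^j(r)\bigr)$. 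Therefore the $j$-th brace equals $[(L^l)_+,\,L^{k-1-j}(q)\Delta^{-1}(L^*)^j(r)]_{-}$, and bringing the sum and scalar coefficients inside the commutator (which commutes with both $[\,\cdot\,,\,\cdot\,]$ and the $(\cdot)_-$ projection) yields $\partial_l Y_k=[(L^l)_+,Y_k]_{-}$.

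The only mild subtlety is making sure the auxiliary identity $\partial_l(L^p(q))=(L^l)_+(L^p(q))$ is justified cleanly — in particular, distinguishing composition of operators from evaluation on a function — since $L$ itself contains a $q\Delta^{-1}r$ tail so $L^p(q)$ is genuinely a nonlocal function of the dynamical variables. Once that step is in place, the rest is a direct application of eq.(\ref{operatork}) and requires no case analysis on $k$; the trivial cases $Y_k=0$ for $k=-1,0,1$ are automatic since both sides of eq.(\ref{ykderivat}) vanish.
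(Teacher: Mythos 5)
Your proof is correct and follows essentially the same route as the paper's: apply the Leibniz rule to the defining sum for $Y_k$, use $\partial_l(L^{p}(q))=(L^l)_+(L^{p}(q))$ and $\partial_l((L^*)^{p}(r))=-(L^l)_+^{*}((L^*)^{p}(r))$, and then repackage each rank-one summand via identity (\ref{operatork}) with $K=(L^l)_+$. Your explicit justification of the cancellation in $\partial_l(L^{p}(q))$ is a detail the paper leaves implicit, but the argument is the same.
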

\begin{proof}
\begin{eqnarray*}
\partial_{l}Y_{k}&=&\partial_{l}\sum_{j=0}^{k-1}[j-\frac{1}{2}(k-1)]L^{k-1-j}(q)\Delta^{-1}(L^*)^j(r)\\\nonumber
&=&\sum_{j=0}^{k-1}[j-\frac{1}{2}(k-1)]\{\partial_{l}(L^{k-1-j}(q))\Delta^{-1}(L^*)^j(r)+L^{k-1-j}(q)\Delta^{-1}\partial_{l}((L^*)^j(r))\}\\\nonumber
&\stackrel{by(\ref{operatork})}{==}&[(L^l)_+\circ \sum_{j=0}^{k-1}[j-\frac{1}{2}(k-1)]L^{k-1-j}(q)\Delta^{-1}(L^*)^j(r)]_-\\
&&-[(\sum_{j=0}^{k-1}[j-\frac{1}{2}(k-1)]L^{k-1-j}(q)\Delta^{-1}(L^*)^j(r))\circ (L^l)_+]_-\\
&=&[(L^l)_+,(\sum_{j=0}^{k-1}[j-\frac{1}{2}(k-1)]L^{k-1-j}(q)\Delta^{-1}(L^*)^j(r))]_-\\
&=&[(L^l)_+,Y_{k}]_-.
\end{eqnarray*}
\end{proof}
The first nontrivial example of (\ref{y}) is given by
\begin{equation}\label{y2}
Y_2=[-\frac{1}{2}L(q)\Delta^{-1}r+\frac{1}{2}q\Delta^{-1}L^*(r)],
\end{equation}for $k=2$.
And
\begin{equation}\label{xy2}
[Y_2,L]_-=-(L^3)_-+\frac{3}{2}[L^2(q)\Delta^{-1}r+q\Delta^{-1}(L^*)^2(r)] \\
+[Y_2(q)\Delta^{-1}r-q\Delta^{-1}{Y_2}^*(r)].
\end{equation}

Further, the following expression of $[Y_{k-1},L]_-$ is also
necessary to define the additional flows of the cdKP hierarchy.
\begin{lemma}
The Lax operator $L$ of constrained dKP hierarchy and $Y_{k-1}$ has the following relation,
\begin{equation}\label{ykl}
[Y_{k-1},L]_-=-(L^k)_-+\frac{k}{2}[q\Delta^{-1}{(L^*)}^{k-1}(r)+L^{k-1}(q)\Delta^{-1}r]+Y_{k-1}(q)\Delta^{-1}r-q\Delta^{-1}Y_{k-1}^*(r)
\end{equation}
\end{lemma}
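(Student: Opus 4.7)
The plan is to apply the $[X,L]_-$ identity from the previous lemma to the specific choice $X = Y_{k-1}$, then reconcile the resulting expression with the closed form for $(L^k)_-$ supplied by the first lemma of this section. Everything reduces to an index shift and a bookkeeping argument on coefficients.

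First I would put $Y_{k-1}$ into the canonical form $\sum_j M_j\Delta^{-1}N_j$ by reading off
$M_j = [j-\tfrac{k-2}{2}]\,L^{k-2-j}(q)$ and $N_j=(L^*)^j(r)$ for $0\le j\le k-2$.
Using $L^*(N_j)=(L^*)^{j+1}(r)$ and $L(M_j)=[j-\tfrac{k-2}{2}]L^{k-1-j}(q)$, the earlier lemma yields
\begin{align*}
[Y_{k-1},L]_-
&= \sum_{j=0}^{k-2}\!\bigl[j-\tfrac{k-2}{2}\bigr]L^{k-2-j}(q)\Delta^{-1}(L^*)^{j+1}(r)\\
&\quad -\sum_{j=0}^{k-2}\!\bigl[j-\tfrac{k-2}{2}\bigr]L^{k-1-j}(q)\Delta^{-1}(L^*)^{j}(r)\\
&\quad + Y_{k-1}(q)\Delta^{-1}r - q\Delta^{-1}Y_{k-1}^*(r).
\end{align*}

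Next I would reindex the first sum by $j\mapsto j-1$ so that both sums are written in terms of $L^{k-1-j}(q)\Delta^{-1}(L^*)^j(r)$ (the first one now running over $j=1,\dots,k-1$). On the overlap $1\le j\le k-2$ the resulting coefficient is
$[j-1-\tfrac{k-2}{2}]-[j-\tfrac{k-2}{2}] = -1$, while the two isolated endpoints $j=0$ (from the second sum) and $j=k-1$ (from the shifted first sum) each contribute coefficient $\tfrac{k-2}{2}$. Thus the first two sums collapse to
\begin{equation*}
-\sum_{j=1}^{k-2}L^{k-1-j}(q)\Delta^{-1}(L^*)^{j}(r) \;+\; \tfrac{k-2}{2}\bigl[L^{k-1}(q)\Delta^{-1}r + q\Delta^{-1}(L^*)^{k-1}(r)\bigr].
\end{equation*}

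Finally I would invoke the identity $(L^k)_- = \sum_{j=0}^{k-1}L^{k-1-j}(q)\Delta^{-1}(L^*)^j(r)$ from the first lemma to recognize the middle sum above as $-(L^k)_-$ minus its $j=0$ and $j=k-1$ endpoints. Absorbing those two endpoint terms into the existing boundary contribution promotes the coefficient from $\tfrac{k-2}{2}$ to $\tfrac{k-2}{2}+1=\tfrac{k}{2}$, which is precisely the right-hand side of \eqref{ykl}. The main obstacle is this index-shift bookkeeping: one must track the $j=0$ and $j=k-1$ boundary terms carefully so that their coefficients upgrade cleanly to $\tfrac{k}{2}$. As sanity checks I would verify the trivial case $k=2$ (where $Y_1=0$ and the identity collapses to the $k=2$ instance of the $(L^k)_-$ lemma) and the $k=3$ case, which must reproduce formula \eqref{xy2}.
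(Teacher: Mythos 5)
Your proposal is correct and follows essentially the same route as the paper: apply the $[X,L]_-$ identity (\ref{operatorXL}) to $Y_{k-1}$, shift the index in the first sum, collapse the overlap to coefficient $-1$ with endpoint coefficient $\tfrac{k}{2}-1$, and then absorb the $j=0$ and $j=k-1$ terms of $(L^k)_-$ from (\ref{Lk}) to upgrade the boundary coefficient to $\tfrac{k}{2}$. The bookkeeping matches the paper's computation exactly.
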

\begin{proof}
\begin{eqnarray*}
[Y_{k-1},L]_-&=&[\sum^{k-2}_{j=0}[j-\frac{1}{2}(k-2)]L^{k-2-j}(q)\Delta^{-1}{L^*}^j(r),L]_-\\\nonumber
&\stackrel{by (\ref{operatorXL})}{==}&\sum^{k-2}_{j=0}[j-\frac{1}{2}(k-2)]L^{k-2-j}(q)\Delta^{-1}{(L^*)}^{j+1}(r)
-\sum^{k-2}_{j=0}[j-\frac{1}{2}(k-2)]L^{k-1-j}(q)\Delta^{-1}{L^*}^j(r)\\\nonumber
&+&Y_{k-1}(q)\Delta^{-1}r-q\Delta^{-1}{Y_{k-1}}^*(r)\\\nonumber
&=&-\sum^{k-2}_{j=1}L^{k-1-j}(q)\Delta^{-1}{(L^*)}^j(r)+(\frac{k}{2}-1)[q\Delta^{-1}{(L^*)}^{k-1}(r)+L^{k-1}(q)\Delta^{-1}r]\\\nonumber
&+&Y_{k-1}(q)\Delta^{-1}r-q\Delta^{-1}Y_{k-1}^*(r)\\\nonumber
&=&-(L^k)_-+\frac{k}{2}[q\Delta^{-1}{(L^*)}^{k-1}(r)+L^{k-1}(q)\Delta^{-1}r]+Y_{k-1}(q)\Delta^{-1}r-q\Delta^{-1}Y_{k-1}^*(r)
\end{eqnarray*}
\end{proof}

Putting together (\ref{additional}) and (\ref{ykl}), we define the
additional flows of the cdKP hierarchy  as
 \begin{equation}\label{tkflow}
\partial_{k}^*L=[-(M_{\Delta}L^k)_-+Y_{k-1},L],
\end{equation}
where $\partial_{k}^*=\overline{\partial_{k,1}}$ and
$Y_{l-1}=0$, for $l=0,1,2$, such that the right-hand side of
(\ref{tkflow}) is in the form of (\ref{araty}). It must be mentioned
that the additional flows $\partial_{k}^*$ of cdKP hierarchy is nothing but the additional flows $\overline{\partial_{k,1}}$ of
the dKP hierarchy for $k=0,1,2$. Generally,
\begin{equation}\label{MLK}
\partial_{k}^*(M_{\Delta}L^l)=[-(M_{\Delta}L^k)_-+Y_{k-1},M_{\Delta}L^l].
\end{equation}

Now we calculate the action of the additional flows (\ref{tkflow})
on the eigenfunction $q$ and $r$ of the cdKP hierarchy.
\begin{theorem}\label{symmetre}
The acting of additional flows of constrained dKP hierarchy on the eigenfunction $q$ and $r$ are
\begin{equation}\label{BAfunction}
\begin{split}
{\partial_{k}^*q}&=(M_\Delta
L^k)_+(q)+Y_{k-1}(q)+\frac{k}{2}L^{k-1}(q),\\
 {\partial_{k}^*r}&=-(M_\Delta
L^k)^*_+(r)-Y_{k-1}^*(r)+\frac{k}{2}{(L^*)}^{k-1}(r).
\end{split}
\end{equation}
\end{theorem}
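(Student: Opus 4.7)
The plan is to read off $\partial_k^* q$ and $\partial_k^* r$ by computing the negative part of $\partial_k^* L$ from its definition (\ref{tkflow}) and matching it against
\begin{equation*}
\partial_k^*(L_-) = (\partial_k^* q)\Delta^{-1} r + q\Delta^{-1}(\partial_k^* r),
\end{equation*}
which is forced by $L_- = q\Delta^{-1}r$ together with $\partial_k^*\Delta = 0$. Since the modification $Y_{k-1}$ was introduced precisely so that the bracket on the right of (\ref{tkflow}) preserves the constrained form, such a matching makes sense.

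The first step is to handle $[-(M_\Delta L^k)_-, L]_-$. I would split $-(M_\Delta L^k)_- = (M_\Delta L^k)_+ - M_\Delta L^k$ and use $[L, M_\Delta] = 1$ to obtain $[M_\Delta L^k, L] = -L^k$, so the full-operator piece contributes $(L^k)_-$ to the projection. For the pure-difference operator $K := (M_\Delta L^k)_+$, its commutator with $L_+$ is purely difference and vanishes under $(\cdot)_-$, while identity (\ref{operatork}) applied to $[K, q\Delta^{-1}r]_-$ contributes $K(q)\Delta^{-1}r - q\Delta^{-1}K^*(r)$. The second step is to add $[Y_{k-1}, L]_-$ as expanded in Lemma (\ref{ykl}).

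Summing the two contributions, the $(L^k)_-$ terms cancel — this was the whole design principle for the coefficients $j - \tfrac{1}{2}(k-1)$ in (\ref{y}) — and what remains regroups as
\begin{equation*}
\bigl[(M_\Delta L^k)_+(q) + Y_{k-1}(q) + \tfrac{k}{2}L^{k-1}(q)\bigr]\Delta^{-1}r + q\Delta^{-1}\bigl[-(M_\Delta L^k)^*_+(r) - Y_{k-1}^*(r) + \tfrac{k}{2}(L^*)^{k-1}(r)\bigr],
\end{equation*}
whose bracketed factors are exactly the claimed $\partial_k^* q$ and $\partial_k^* r$. The main obstacle is essentially bookkeeping — tracking the $\pm\tfrac{k}{2}$ coefficients and the adjoint signs — but this is already encoded in Lemma (\ref{ykl}). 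One minor subtlety worth noting is the gauge ambiguity $(A,B) \mapsto (A + cq, B - cr)$ in writing a negative operator as $A\Delta^{-1}r + q\Delta^{-1}B$; the canonical reading is pinned down by demanding agreement with the unmodified dKP additional flow for $k = 0, 1, 2$, where $Y_{k-1} = 0$ and the formulas reduce to (\ref{additionalofdkp}).
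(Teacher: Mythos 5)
Your proposal is correct and follows essentially the same route as the paper: compute $(\partial_k^*L)_-$ by splitting $-(M_\Delta L^k)_-$ into $(M_\Delta L^k)_+ - M_\Delta L^k$, use $[L,M_\Delta]=1$ and identity (\ref{operatork}) to get $(L^k)_- + (M_\Delta L^k)_+(q)\Delta^{-1}r - q\Delta^{-1}(M_\Delta L^k)_+^*(r)$, add Lemma (\ref{ykl}) so the $(L^k)_-$ terms cancel, and match against $(\partial_k^*q)\Delta^{-1}r + q\Delta^{-1}(\partial_k^*r)$. You simply make explicit the steps the paper compresses into ``substitution of (\ref{ykl})'', and your remark on the gauge ambiguity $(A,B)\mapsto(A+cq,\,B-cr)$ is a fair point the paper itself only tacitly acknowledges via the $\alpha+\beta=1$ freedom in (\ref{PL123}).
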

\begin{proof}
Substitution of (\ref{ykl}) to negative part of (\ref{tkflow}) shows
\begin{eqnarray}\label{tkflow2}
\begin{split}
{(\partial_{k}^*L)}_-&=(M_\Delta
L^k)_+(q)\Delta^{-1}(r)-q\Delta^{-1}(M_\Delta L^k)_+^*(r)\\
&+Y_{k-1}(q)\Delta^{-1}r-q\Delta^{-1}Y_{k-1}^*(r)
+\frac{k}{2}q\Delta^{-1}(L^*)^{k-1}(r)+\frac{k}{2}L^{k-1}(q)\Delta^{-1}r.
\end{split}
\end{eqnarray}

On the other side,
\begin{equation}\label{tkl}
{(\partial_{k}^*L)}_-=(\partial_{k}^*q)\Delta^{-1}r+q\Delta^{-1}{(\partial_{k}^*r)}.
\end{equation}
Comparing right hand sides of (\ref{tkflow2}) and (\ref{tkl})
implies  the action of additional flows on the eigenfunction and the
adjoint eigenfunction (\ref{BAfunction}).
\end{proof}

And the case  of (\ref{BAfunction}) with $k=3$ is
\begin{eqnarray}
\begin{split}
{\partial_{3}^*q}&=(M_\Delta
L^3)_+(q)+Y_{2}(q)+\frac{3}{2}L^{2}(q),\\
{\partial_{3}^*r}&=-(M_\Delta
L^3)^*_+(r)-Y_{2}^*(r)+\frac{3}{2}{L^*}^{2}(r).
\end{split}
\end{eqnarray}

Next we shall prove the commutation relation between the additional
flows $\partial^*_k$ of cdKP hierarchy and the original flows $\partial_{l}$ of
it.
\begin{theorem}
The additional flows of $\partial^*_k$ commute with all $\partial_{l}$ flows of the  cdKP hierarchy.
\end{theorem}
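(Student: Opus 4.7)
The plan is to lift the commutator up to the dressing operator $W$ so that the whole theorem collapses onto a single operator identity. Write $A := -(M_\Delta L^k)_- + Y_{k-1}$ for the generator of the additional flow; by $L = W\circ\Delta\circ W^{-1}$ this is consistent with taking $\partial_k^* W = A\circ W$. Combining this with the Sato equation $\partial_l W = -(L^l)_-\circ W$, expanding by the Leibniz rule, and using $\partial_k^*(L^l)_- = [A,L^l]_-$ yields
\begin{equation*}
[\partial_l,\partial_k^*]\, W \;=\; \bigl(\partial_l A + [A,L^l]_- - [A,(L^l)_-]\bigr)\circ W.
\end{equation*}
Breaking $[A,L^l]_-$ and $[A,(L^l)_-]$ into their positive and negative projections reduces the theorem to the single operator identity
\begin{equation*}
\partial_l A \;=\; [A,(L^l)_-]_+ \;-\; [A,(L^l)_+]_-.
\end{equation*}

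I would verify this identity by splitting $A = A_0 + Y_{k-1}$ with $A_0 := -(M_\Delta L^k)_-$ and treating the two pieces in turn. For $A_0$, the dKP action (\ref{ML}) gives $\partial_l A_0 = -[(L^l)_+, M_\Delta L^k]_-$; splitting $M_\Delta L^k$ and $L^l$ into their positive and negative projections, the required identity collapses to the two parity facts $[(L^l)_+,(M_\Delta L^k)_+]_- = 0$ (the commutator of two non-negative-degree operators is non-negative) and $[(M_\Delta L^k)_-,(L^l)_-]_+ = 0$ (the commutator of two strictly negative operators is strictly negative). This part of the argument essentially reprises the known dKP commutativity $[\overline{\partial_{k,1}},\partial_l]=0$ inside the cdKP framework.

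For the correction term $Y_{k-1}$, Lemma (\ref{ykderivat}) provides $\partial_l Y_{k-1} = [(L^l)_+, Y_{k-1}]_-$ directly. The only additional observation required is that each summand of $Y_{k-1}$ has the form $f\Delta^{-1}g$, so $Y_{k-1}$ has order exactly $-1$; hence $[Y_{k-1},(L^l)_-]$ has order at most $-2$, forcing $[Y_{k-1},(L^l)_-]_+ = 0$. The required identity then reduces exactly to the lemma. Adding the two contributions proves the identity, so $[\partial_l,\partial_k^*] W = 0$ and therefore $[\partial_l,\partial_k^*] L = 0$; the induced commutation on the eigenfunctions $q,r$ follows by substituting the formulas of Theorem \ref{symmetre} and repeating the same positive/negative projection bookkeeping.

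The main obstacle is not conceptual but organisational: each term has to be sorted into its positive, negative, and order-$(-1)$ contributions, and one has to recognise that $Y_{k-1}$ was engineered precisely so that its low order makes $[Y_{k-1},(L^l)_-]_+$ vanish automatically, leaving the dKP parity argument to handle the rest. The fine choice of coefficient $j-\tfrac12(k-1)$ inside $Y_k$ does not enter the commutativity proof at all — it is only the order of $Y_{k-1}$ and the identity (\ref{ykderivat}) that matter here; the coefficient will become essential later in Section \ref{section4} for extracting the Virasoro algebra.
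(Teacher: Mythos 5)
Your proof is correct and follows essentially the same route as the paper: both lift the computation to the dressing operator $W$ via the Sato equation, use the isospectral flow of $M_\Delta L^k$ together with Lemma (\ref{ykderivat}) for the correction term $Y_{k-1}$, and finish by positive/negative projection bookkeeping. Your packaging of the whole argument into the single identity $\partial_l A = [A,(L^l)_-]_+ - [A,(L^l)_+]_-$ is just a cleaner organization of the paper's direct Jacobi-identity computation, not a different method.
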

\begin{proof}
According the action of  $\partial^*_k$ and $\partial_{l}$ on the
Sato operator $W$ (\ref{satoofcdkp},\ref{tkflow}) and
(\ref{ykderivat}), then
\begin{eqnarray*}
[\partial^*_k,\partial_{l}]W &=& -\partial^*_k(L_-^l
W)-\partial_l[-(M_\Delta L^k)_-+Y_{k-1}]W \\
&=&(-\partial_k^*L_-^l)W-L^l_-\partial_k^*W-[(M_\Delta
L^k)_--Y_{k-1}]L_-^lW \\
 &&+[L^l_+,M_\Delta L^k]_-W-(\partial_l Y_{k-1})W\\
&=&[L^l_-,-Y_{k-1}]_-W+[-Y_{k-1},L^l]_-W-(\partial_l Y_{k-1})W \text{\quad by Jacobi identity}\\
&\stackrel{(\ref{ykderivat})}{==}&[L^l_+,-Y_{k-1}]_-W-(\partial_l Y_{k-1})W\\
&=&0.
\end{eqnarray*}
Therefore, the additional  flows $\partial^*_k$ commute with all flows $\partial_l$  of the cdKP hierarchy.
\end{proof}
\noindent\textbf{Remark:} This theorem  implies that the additional
flows $\partial^*_k$ (\ref{tkflow}) are  symmetry flows of the cdKP
hierarchy.

\section{Virasoro type algebraic structure of the additional symmetry
flows}\label{section4}

In this section, we shall discuss the algebraic structure of the
additional symmetry  flows of the cdKP hierarchy. For this end, we
need the actions of $\partial_{l}^*$ on $Y_k$ and $L$.

Taking into account $Y_{k-1}=0$  for $k=0,1,2 $, then
Eqs.(\ref{BAfunction}) becomes
\begin{eqnarray}\label{PL123}
\begin{split}
{\partial_{0}^*q}&=(M_\Delta
)_+(q),&
{\partial_{0}^*r}&=-(M_\Delta
)^*_+(r),\\
{\partial_{1}^*q}&=(M_\Delta
L)_+(q)+\alpha q,&
{\partial_{1}^*r}&=-(M_\Delta
L)^*_+(r)+\beta r, \alpha +\beta=1,\\
{\partial_{2}^*q}&=(M_\Delta
L^2)_+(q)+ L(q),&
{\partial_{2}^*r}&=-(M_\Delta
L^2)^*_+(r)+L^*(r).
\end{split}
\end{eqnarray}
We can rewrite the (\ref{PL123}) for
\begin{eqnarray}\label{PLqr}
\begin{split}
\partial_{l}^*q&=(M_\Delta
L)_+(q)+\frac{1}{2}l L^{l-1} q,\\
\partial_{l}^*r&=-(M_\Delta
L)^*_+(r)+\frac{1}{2}l {(L^*)}^{l-1} r,
\end{split}
\end{eqnarray} with $l=0,1,2.$

Because of $\partial_{l}^*=\overline{\partial_{l,1}},l=0,1,2$ as
  mentioned above, we have the following lemma.
\begin{lemma}\label {LqLr}
The additional  flows $\partial_{l}^*$ of cdKP hierarchy have the following relations for $l=0,1,2$ and $k \geq 0$, namely,
\begin{eqnarray}\label{lqstar}
\begin{split}
{\partial_{l}^* L^k(q)}&=(M_\Delta
L^l)_+(L^k(q))+(k+\frac{l}{2})L^{k+l-1}(q),\\
{\partial_{l}^* {(L^*)}^k(r)}&=-(M_\Delta
L^l)^*_+{(L^*)}^k(r)+(k+\frac{l}{2}){(L^*)}^{k+l-1}(r).
\end{split}
\end{eqnarray}
\end{lemma}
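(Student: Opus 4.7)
The plan is to establish both identities by a direct computation that exploits two facts: the Leibniz rule for the derivation $\partial_l^*$ acting on compositions of operators and functions, and the Heisenberg-type relation $[L,M_\Delta]=1$ stated earlier. Since for $l=0,1,2$ we have $Y_{l-1}=0$, the defining flow (\ref{tkflow}) simplifies to $\partial_l^* L=[-(M_\Delta L^l)_-,L]$, so as a derivation on the ring of pseudo-difference operators $\partial_l^*$ acts on any power by $\partial_l^* L^k=[-(M_\Delta L^l)_-,L^k]$. Writing $-(M_\Delta L^l)_-=(M_\Delta L^l)_+-M_\Delta L^l$ splits this into an ``upper'' piece and the commutator with $M_\Delta L^l$.

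The next step is to evaluate $[M_\Delta L^l,L^k]$. Starting from $[L,M_\Delta]=1$, a one-line induction gives $[L^k,M_\Delta]=kL^{k-1}$, hence
\[
[M_\Delta L^l,L^k]=[M_\Delta,L^k]L^l=-kL^{k+l-1}.
\]
Therefore $\partial_l^* L^k=[(M_\Delta L^l)_+,L^k]+kL^{k+l-1}$. Applying this to $q$ and using the Leibniz rule together with the base case (\ref{PLqr}),
\begin{align*}
\partial_l^*(L^k(q))&=(\partial_l^* L^k)(q)+L^k(\partial_l^* q)\\
&=(M_\Delta L^l)_+L^k(q)-L^k(M_\Delta L^l)_+(q)+kL^{k+l-1}(q)\\
&\quad+L^k\!\left[(M_\Delta L^l)_+(q)+\tfrac{l}{2}L^{l-1}(q)\right]\\
&=(M_\Delta L^l)_+(L^k(q))+\Bigl(k+\tfrac{l}{2}\Bigr)L^{k+l-1}(q),
\end{align*}
which is the first identity. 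The cross terms $\pm L^k(M_\Delta L^l)_+(q)$ cancel cleanly, which is the whole point of the computation.

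For the adjoint identity I would run the same argument with $*$ applied throughout: the relation $[L,M_\Delta]=1$ dualizes to $[M_\Delta^*,L^*]=1$, whence $[(L^*)^k,M_\Delta^*]=-k(L^*)^{k-1}$ and $[L^l M_\Delta,L^k]$ has the expected transpose. Taking adjoints of (\ref{tkflow}) yields $\partial_l^*(L^*)^k=[(L^*)^k,(M_\Delta L^l)^*_+]+k(L^*)^{k+l-1}$, so acting on $r$ and inserting the second line of (\ref{PLqr}) produces the claimed formula by the same cancellation. The only real bookkeeping hazard is the sign/order in the adjoint relation $(AB)^*=B^*A^*$ and making sure that $(M_\Delta L^l)_+^*$ is handled consistently with the convention used to form $+$--parts in the dual ring $F(\triangle^*)$; once that is fixed, the two identities are genuinely parallel and the Heisenberg commutator does all the work.
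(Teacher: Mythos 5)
Your proposal is correct and takes essentially the same route as the paper's (one-line) proof: the Leibniz rule for $\partial_l^*$ acting on $L^k(q)$, the flow formula $\partial_l^*L^k=[-(M_{\Delta}L^l)_-,L^k]$ coming from eq.(\ref{ML}), and the base case eq.(\ref{PLqr}). You simply make explicit the step the paper leaves implicit, namely $[M_{\Delta}L^l,L^k]=[M_{\Delta},L^k]L^l=-kL^{k+l-1}$ from the Heisenberg relation $[L,M_{\Delta}]=1$, and the parallel adjoint bookkeeping; both computations check out.
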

\begin{proof}
It is easy to get this  by using  (\ref{ML}), (\ref{PLqr}) and the
relation
${\overline{\partial_{l,1}}(L^k(q))}=(\overline{\partial_{l,1}}(L^k))(q)+L^k
\overline{\partial_{l,1}}(q)$.
\end{proof}

Moreover, the action of
 $\partial^*_l$ on $Y_{k}$ is given by the following lemma.
\begin{lemma}
The actions on $Y_{k}$ of the additional  symmetry flows
$\partial^*_l$ of the cdKP hierarchy are
\begin{eqnarray}\label {PL}
\partial^*_l Y_{k}=[(M_{\Delta}L^l)_+,Y_{k}]_-+(k-l+1)Y_{k+l-1},
\end{eqnarray} for $l=0,1,2$ and $k\geq 0$.
\end{lemma}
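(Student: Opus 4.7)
The plan is to differentiate the defining expression for $Y_k$ term-by-term and organize the result so that one piece reassembles into $[(M_\Delta L^l)_+,Y_k]_-$ while the remaining piece, after reindexing, is recognized as $(k-l+1)Y_{k+l-1}$.

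First I would apply the Leibniz rule to $Y_k=\sum_{j=0}^{k-1}[j-\tfrac{1}{2}(k-1)]L^{k-1-j}(q)\Delta^{-1}(L^*)^j(r)$ and substitute the formulas of Lemma \ref{LqLr} for the actions of $\partial^*_l$ on $L^{k-1-j}(q)$ and on $(L^*)^j(r)$. Each $j$-summand then produces four contributions: two that feature $(M_\Delta L^l)_+$ (one acting on the function to the left of $\Delta^{-1}$, and one that, via its formal adjoint, acts on the function to the right), and two ``shift'' contributions carrying the scalar factors $k-1-j+\tfrac{l}{2}$ and $j+\tfrac{l}{2}$.

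Second, I would collapse the operator contributions. Since $(M_\Delta L^l)_+$ is a pure-difference operator, identity (\ref{operatork}) applied to each rank-one summand yields
\begin{equation*}
\bigl[(M_\Delta L^l)_+,\,L^{k-1-j}(q)\Delta^{-1}(L^*)^j(r)\bigr]_-=(M_\Delta L^l)_+\bigl(L^{k-1-j}(q)\bigr)\Delta^{-1}(L^*)^j(r)-L^{k-1-j}(q)\Delta^{-1}(M_\Delta L^l)_+^*\bigl((L^*)^j(r)\bigr).
\end{equation*}
Summing over $j$ with the weights $j-\tfrac{1}{2}(k-1)$ gives exactly $[(M_\Delta L^l)_+,Y_k]_-$, producing the first term on the right-hand side of (\ref{PL}).

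Third, I would collect the remaining shift part
\begin{equation*}
R_{l,k}=\sum_{j=0}^{k-1}\bigl[j-\tfrac{k-1}{2}\bigr]\Bigl\{\bigl(k-1-j+\tfrac{l}{2}\bigr)L^{k+l-2-j}(q)\Delta^{-1}(L^*)^j(r)+\bigl(j+\tfrac{l}{2}\bigr)L^{k-1-j}(q)\Delta^{-1}(L^*)^{j+l-1}(r)\Bigr\},
\end{equation*}
and reindex both inner sums so that every summand takes the canonical shape $L^{k+l-2-i}(q)\Delta^{-1}(L^*)^i(r)$ with $0\le i\le k+l-2$. A direct scalar check then shows that the total coefficient of this monomial equals $(k-l+1)\bigl[i-\tfrac{1}{2}(k+l-2)\bigr]$, which is precisely the weight appearing in the definition of $Y_{k+l-1}$.

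The main obstacle is the bookkeeping in this third step: one must verify the resulting scalar identity for each $l\in\{0,1,2\}$ and each $i$ in the admissible range, and also handle the degenerate cases $k+l-1\le 1$ in which $Y_{k+l-1}=0$ and the two contributions in $R_{l,k}$ must cancel. For $l=0$ the potentially ill-defined boundary terms $L^{-1}(q)$ and $(L^*)^{-1}(r)$ are killed automatically by the vanishing of $(j+\tfrac{l}{2})$ at $j=0$ and of $(k-1-j+\tfrac{l}{2})$ at $j=k-1$; for $l=1,2$ no negative exponents appear, and the three values of $l$ allow the combinatorial identity to be checked by a short explicit computation. No analytic ingredient beyond Lemma \ref{LqLr} and identity (\ref{operatork}) is needed.
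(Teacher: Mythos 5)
Your proposal matches the paper's own proof essentially step for step: differentiate $Y_k$ by the Leibniz rule, substitute Lemma \ref{LqLr}, recombine the two operator terms into $[(M_\Delta L^l)_+,Y_k]_-$ via identity (\ref{operatork}), and reindex the two shift terms to recognize $(k-l+1)Y_{k+l-1}$. If anything, you are more explicit than the paper about the boundary terms and degenerate ranges in the reindexing step, which is a point the published computation passes over silently.
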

\begin{proof} A straightforward calculation implies
\begin{eqnarray}\label{partialyk}
\begin{split}
\partial_{l}^* Y_{k}&=\partial_{l}^* \sum_{j=0}^{k-1}[j-\frac{1}{2}(k-1)]L^{k-1-j}(q)\Delta^{-1}(L^*)^j(r)\\
&=\sum_{j=0}^{k-1}[j-\frac{1}{2}(k-1)](\partial_{l}^*(L^{k-1-j}(q))\Delta^{-1}(L^*)^j(r)+L^{k-1-j}(q)\Delta^{-1}(\partial_{l}^*(L^*)^j(r)))\\
&\stackrel{(\ref{lqstar})}{==}\sum_{j=0}^{k-1}[j-\frac{1}{2}(k-1)](M_{\Delta}L^l)_+(L^{k-1-j}(q))\Delta^{-1}(L^*)^j(r)\\
&+\sum_{j=0}^{k-1}[j-\frac{1}{2}(k-1)](k-j-1+\frac{l}{2})L^{k+l-2-j}(q)\Delta^{-1}(L^*)^j(r)\\
&-\sum_{j=0}^{k-1}[j-\frac{1}{2}(k-1)]L^{k-1-j}(q)\Delta^{-1}(M_{\Delta}L^l)^*_+(L^*)^j(r)\\
&+\sum_{j=0}^{k-1}[j-\frac{1}{2}(k-1)](j+\frac{l}{2})L^{k-1-j}(q)\Delta^{-1}(L^*)^{j+l-1}(r).
\end{split}
\end{eqnarray}
The first term and the third one of the right part of
eq.(\ref{partialyk}) can be simplified  to
\begin{eqnarray}\label{13}
\begin{split}
&\sum_{j=0}^{k-1}[j-\frac{1}{2}(k-1)](M_{\Delta}L^l)_+(L^{k-1-j}(q))\Delta^{-1}(L^*)^j(r)\\
&-\sum_{j=0}^{k-1}[j-\frac{1}{2}(k-1)]L^{k-1-j}(q)\Delta^{-1}(M_{\Delta}L^l)^*_+(L^*)^j(r)\\
&=[(M_{\Delta}L^l)_+,Y_{k}]_-.
\end{split}
\end{eqnarray}
Furthermore, on behaving of
\begin{eqnarray*}
&&\sum_{j=0}^{k-1}[j-\frac{1}{2}(k-1)](j+\frac{l}{2})L^{k-1-j}(q)\Delta^{-1}(L^*)^{j+l-1}(r)\\
&&=\sum_{j=l-1}^{k-1+l-1}[j-l+1-\frac{1}{2}(k-1)](j-l+1+\frac{l}{2})L^{k+l-2-j}(q)\Delta^{-1}(L^*)^{j}(r),
\end{eqnarray*}
the second term and the fourth one of the right part of
eq.(\ref{partialyk}) can also be simplified to
\begin{eqnarray} \label{24}
\begin{split}
&\sum_{j=0}^{k-1}[j-\frac{1}{2}(k-1)](k-j-1+\frac{l}{2})L^{k+l-2-j}(q)\Delta^{-1}(L^*)^j(r)\\
&+\sum_{j=0}^{k-1}[j-\frac{1}{2}(k-1)](j+\frac{l}{2})L^{k-1-j}(q)\Delta^{-1}(L^*)^{j+l-1}(r)\\
&=\sum_{j=0}^{k-1}[j-\frac{1}{2}(k-1)](k-j-1+\frac{l}{2})L^{k+l-2-j}(q)\Delta^{-1}(L^*)^j(r)\\
&+\sum_{j=l-1}^{k-1+l-1}[j-l+1-\frac{1}{2}(k-1)](j-l+1+\frac{l}{2})L^{k+l-2-j}(q)\Delta^{-1}(L^*)^{j}(r)\\
&=\sum_{j=0}^{k-1}(k-l+1)[j-\frac{1}{2}(k+l-2)]L^{k+l-2-j}(q)\Delta^{-1}(L^*)^j(r)\\
&=(k-l+1)Y_{k+l-1}.
\end{split}
\end{eqnarray}
Taking (\ref{13}) and (\ref{24}) back into  eq.(\ref{partialyk}) we
have
\begin{eqnarray*}
\partial_{l}^* Y_{k}&=&[(M_{\Delta}L^l)_+,Y_{k}]_-+(k-l+1)Y_{k+l-1}.
\end{eqnarray*}
\end{proof}

Now we are in a position to identity the algebraic structure of the
additional symmetry flows of the cdKP hierarchy.
\begin{theorem}\label{alg}
The additional flows $\partial^*_k$ of the cdKP hierarchy form the
positive half of Virasoro algebra, i.e.,
\begin{equation}
[\partial^*_l,\partial^*_k]=(k-l)\partial^*_{k+l-1},
\end{equation}for $l=0,1,2$, and $k\geq0$.
\end{theorem}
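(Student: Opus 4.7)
The plan is to verify the Virasoro identity at the level of the dressing operator $W$, where it reduces to a purely algebraic operator identity. Writing $A_j:=-(M_\Delta L^j)_-+Y_{j-1}$, the additional flow acts as $\partial_j^* W=A_j\circ W$ (consistent with (\ref{tkflow}) via $L=W\circ\Delta\circ W^{-1}$), so direct differentiation yields
\begin{equation*}
[\partial_l^*,\partial_k^*]W=\bigl(\partial_l^*A_k-\partial_k^*A_l+[A_k,A_l]\bigr)\circ W.
\end{equation*}
Hence the theorem is equivalent to the operator identity
\begin{equation*}
\partial_l^*A_k-\partial_k^*A_l+[A_k,A_l]=(k-l)\,A_{k+l-1},\qquad l\in\{0,1,2\},\ k\ge 0.
\end{equation*}

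To evaluate the left-hand side I would first apply (\ref{MLK}) to rewrite $\partial_l^*(M_\Delta L^k)_-=[A_l,M_\Delta L^k]_-$, and Lemma (\ref{PL}) to rewrite $\partial_l^*Y_{k-1}=[(M_\Delta L^l)_+,Y_{k-1}]_-+(k-l)Y_{k+l-2}$. The restriction $l\in\{0,1,2\}$ is essential here: by (\ref{y12}) it forces $Y_{l-1}=0$, so $A_l$ collapses to $-(M_\Delta L^l)_-$ and $\partial_k^*Y_{l-1}$ vanishes identically. This eliminates precisely the cross-terms that would otherwise obstruct the closure.

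The resulting expression then splits into three groups. The pure $M_\Delta L^j$-terms combine, via the elementary projection identity $[P,Q]_-=[P_+,Q_-]_-+[P_-,Q_+]_-+[P_-,Q_-]$, into $[M_\Delta L^l,M_\Delta L^k]_-$, and from $[L,M_\Delta]=1$ one deduces the algebraic identity $[M_\Delta L^l,M_\Delta L^k]=(l-k)M_\Delta L^{l+k-1}$, so this group contributes $(l-k)(M_\Delta L^{l+k-1})_-$. The mixed terms pairing $Y_{k-1}$ with the $\pm$-parts of $M_\Delta L^l$ telescope to zero: $Y_{k-1}$ is purely negative, so it passes through the $_-$-projection without remainder. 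The only surviving $Y$-contribution is $(k-l)Y_{k+l-2}$ coming from the shift in (\ref{PL}). Summing,
\begin{equation*}
(l-k)(M_\Delta L^{l+k-1})_-+(k-l)Y_{k+l-2}=(k-l)\bigl[-(M_\Delta L^{l+k-1})_-+Y_{k+l-2}\bigr]=(k-l)A_{k+l-1},
\end{equation*}
which is the desired identity.

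The main obstacle is the careful bookkeeping of $\pm$-projections when the commutators are expanded, because mixed brackets such as $[(M_\Delta L^l)_+,(M_\Delta L^k)_-]$ are neither purely positive nor purely negative and their $_-$-projections must be tracked summand by summand before cancellations become visible. A secondary point is to explain why $l\le 2$ is sharp: for $l\ge 3$ one picks up an additional $[Y_{l-1},Y_{k-1}]$ inside $[A_k,A_l]$, together with an additional $[Y_{l-1},M_\Delta L^k]_-$ inside $\partial_l^*A_k$ (coming from the new $Y$-summand of $A_l$ appearing in (\ref{MLK})), and these would alter the coefficient of $Y_{k+l-2}$ and destroy the Virasoro closure.
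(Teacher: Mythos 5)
Your argument is correct, and it rests on the same three ingredients as the paper's proof --- eq.~(\ref{MLK}) for $\partial_l^*(M_\Delta L^k)$, the lemma containing eq.~(\ref{PL}) for $\partial_l^*Y_{k-1}$ together with the vanishing of $Y_{l-1}$ for $l\le 2$, and the fact that $Y_{k-1}$ and $(M_\Delta L^j)_-$ are purely negative --- but you organize the computation at the level of the dressing operator rather than the Lax operator. The paper computes $[\partial_l^*,\partial_k^*]L$ directly from eq.~(\ref{tkflow}); this produces nested commutators with $L$ that must then be recombined using the Jacobi identity and the observation $[(M_\Delta L^l)_-,Y_{k-1}]_-=[(M_\Delta L^l)_-,Y_{k-1}]$. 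By passing to $W$ you replace all of that with the single closure identity $\partial_l^*A_k-\partial_k^*A_l+[A_k,A_l]=(k-l)A_{k+l-1}$, whose verification is the same projection bookkeeping with one fewer layer of brackets; in particular the commutation relation $[M_\Delta L^l,M_\Delta L^k]=(l-k)M_\Delta L^{l+k-1}$, which the paper uses only implicitly in its fourth equality, is isolated cleanly, and I have checked that your three groups of terms do sum as claimed. The one step you should make explicit is the starting point $\partial_k^*W=A_k\circ W$: the paper defines the modified flows only through their action on $L$ in eq.~(\ref{tkflow}), so the Sato form is an assumption rather than a quotation; it is, however, exactly what the paper itself uses in proving that $\partial_k^*$ commutes with the $\partial_l$ flows, and it is legitimate because $A_k$ is purely negative, so $A_k\circ W$ preserves the form $W=1+\sum_{j\ge 1}w_j\Delta^{-j}$. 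Your closing remark explaining why $l\le 2$ is sharp (the extra $[Y_{l-1},Y_{k-1}]$ and $[Y_{l-1},M_\Delta L^k]_-$ terms for $l\ge 3$) is a worthwhile observation that the paper does not make.
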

\begin{proof}
\begin{eqnarray*}
[\partial^*_l,\partial^*_k]L
&=&\partial^*_l([-(M_\Delta L^k)_-,L]+[Y_{k-1},L])-\partial^*_k([-(M_\Delta
L^l)_-,L]) \\
&=&\partial_{l}^* [-(M_\Delta L^k)_-,L] +[\partial_{l}^* Y_{k-1},L]+[Y_{k-1}^{(1)},\partial_{l}^* L]+[\partial^*_k(M_\Delta
L^l)_-,L]+[(M_\Delta
L^l)_-,\partial^*_k L]\\
&\stackrel{(\ref{PL})}{==}&[-(\partial_{l}^* (M_\Delta L^k))_-,L] +[-(M_\Delta L^k)_-,(\partial_{l}^* L)] +[[(M_{\Delta}L^l)_+,Y_{k-1}]_-+(k-l)Y_{k+l-2},L]\\
&&+[Y_{k-1},[-(M_{\Delta}L^l)_-,L]]
+[[-(M_{\Delta}L^k)_-+Y_{k-1},M_{\Delta}L^l]_-,L]\\
&&+[(M_\Delta
L^l)_-,[-(M_{\Delta}L^k)_-+Y_{k-1},L]] \\
&=&(k-l)[-(M_{\Delta}L^{k+l-1})_-,L]+(k-l)[Y_{k+l-2},L]\\
&&-[[(M_{\Delta}L^l)_-,Y_{k-1}],L]
+[Y_{k-1},[-(M_{\Delta}L^l)_-,L]]
+[(M_\Delta
L^l)_-,[Y_{k-1},L]]\\
&=&(k-l)\partial^*_{k+l-1}L.
\end{eqnarray*}
$[(M_{\Delta}L^l)_-,Y_{k-1}]_-=[(M_{\Delta}L^l)_-,Y_{k-1}]$
and the Jacobi identity have been used in the fourth identity.
\end{proof}

\section{Conclusions and Discussions}\label{conclusion}

 In this paper, the additional symmetry flows in eq.(\ref{tkflow}) for the cdKP
hierarchy have been constructed by a modification of the
corresponding one of the dKP hierarchy. In this process, the
difference  operator $Y_k$ plays a very crucial role. The
actions of the additional flows (\ref{tkflow}) on the eigenfunction
$q$ and $r$ of one-component cdKP hierarchy were obtained in theorem
\ref{symmetre}. In theorem \ref{alg}, these flows have been shown to
provide a hidden algebraic structure, i.e., the Virasoro
algebra(positive half), in the cdKP hierarchy. Thus we can say  that
the discretization from KP hierarchy to dKP hierarchy is good enough to
retain several interesting mathematical structures including Lax
pair\cite{Kupershimidt}, $\tau$ function\cite{Iliev} and algebraic
structure.

It is possible to extend our results to the $n$-component cdKP
hierarchy associated with a Lax operator
\begin{equation*} \label{laxofncdkp}
L(n)=\Delta+\sum_{i=1}^{m}q_i(n,t)\triangle^{-1}r_i(n,t).
\end{equation*}
Solving the cdKP hierarchy  is also an interesting topic. We shall
do it a near future.

{\bf Acknowledgments} {\noindent \small This work is supported by
the NSF of China under Grant No.10971109 and K.C.Wong Magna Fund in
Ningbo University. Jingsong He is also supported by Program for NCET
under Grant No.NCET-08-0515.  One of the authors (KT) is  supported by Erasmus Mundus Action 2 EXPERTS
 and would like to thank Prof. Jarmo Hietarinta for many helps.
}

\newpage{}
\vskip20pt


\end{document}